\newtheorem{thm}{Theorem}
\newtheorem{lem}{Lemma}
\theoremstyle{definition}
\newtheorem{defn}{Definition}
\newtheorem{rem}{Remark}
\def\BibTeX{{\rm B\kern-.05em{\sc i\kern-.025em b}\kern-.08em
    T\kern-.1667em\lower.7ex\hbox{E}\kern-.125emX}}
\newcommand{\indep}{\raisebox{0.05em}{\rotatebox[origin=c]{90}{$\models$}}}
\DeclareMathOperator*{\argmin}{\arg\!\min}
\newenvironment{sketchproofachievable}{%
   \proof}{\endproof}
\newenvironment{proofconverse}{%
  \proof}{\endproof}
\begin{document}

\title{Seeded Database Matching Under Noisy Column Repetitions\\
\thanks{This work is supported by National Science Foundation grants 1815821 and 2148293.}} 

\author{Serhat Bakirtas, Elza Erkip\\
 NYU Tandon School of Engineering\\
Emails: \{serhat.bakirtas, elza\}@nyu.edu }

\maketitle

\begin{abstract}
The re-identification or de-anonymization of users from anonymized data through matching with publicly-available correlated user data has raised privacy concerns, leading to the complementary measure of obfuscation in addition to anonymization. Recent research provides a fundamental understanding of the conditions under which privacy attacks are successful, either in the presence of obfuscation or synchronization errors stemming from the sampling of time-indexed databases. This paper presents a unified framework considering both obfuscation and synchronization errors and investigates the matching of databases under noisy column repetitions. By devising replica detection and seeded deletion detection algorithms, and using information-theoretic tools, sufficient conditions for successful matching are derived. It is shown that a seed size logarithmic in the row size is enough to guarantee the detection of all deleted columns. It is also proved that this sufficient condition is necessary, thus characterizing the database matching capacity of database matching under noisy column repetitions and providing insights on privacy-preserving publication of anonymized and obfuscated time-indexed data.
\end{abstract}

\section{Introduction}
\label{sec:introduction}
With the exponential boom in smart devices and the growing popularity of big data, companies and institutions have been gathering more and more personal data from users which is then either published or sold for research or commercial purposes. Although the published data is typically \emph{anonymized}, \emph{i.e.,} explicit identifiers of the users, such as names and dates of birth are removed, researchers~\cite{ohm2009broken} and companies~\cite{bigdata} have articulated their concerns over {the} insufficiency of anonymization for privacy as demonstrated by a series of practical attacks on real data~\cite{naini2015you,datta2012provable,narayanan2008robust,sweeney1997weaving,takbiri2018matching}. \emph{Obfuscation}, which refers to the deliberate addition of noise to the database entries, has been suggested as an additional measure to protect privacy~\cite{sweeney1997weaving}. While extremely valuable, this line of work does not provide a fundamental and rigorous understanding of the conditions under which anonymized and obfuscated databases are prone to privacy attacks. 

Recently, matching correlated pairs of databases have been investigated from an information-theoretic~\cite{cullina,shirani8849392,dai2019database,bakirtas2021database,noiselesslonger} and statistical~\cite{kunisky2022strong} points of view.
In~\cite{cullina}, Cullina \emph{et al.} proposed \emph{cycle mutual information} as a metric of correlation and derived sufficient and necessary conditions for successful matching, with the performance criterion being the perfect recovery for all users.
In~\cite{shirani8849392}, Shirani \emph{et al.} considered a pair of anonymized and obfuscated databases and drew analogies between database matching and channel decoding. By doing so, they derived sufficient and necessary conditions on the \emph{database growth rate} for reliable matching, in the presence of noise on the database entries. In~\cite{dai2019database} Dai \emph{et al.} investigated the matching of correlated databases with Gaussian attributes with the perfect recovery criterion. In~\cite{kunisky2022strong}, Kunisky and Niles-Weed investigated the same problem as Dai \emph{et al.}, from a statistical perspective, in different database size regimes for several performance criteria.
\begin{figure}[t]
\centerline{\includegraphics[width=0.5\textwidth,trim={0 15cm 2cm 0},clip]{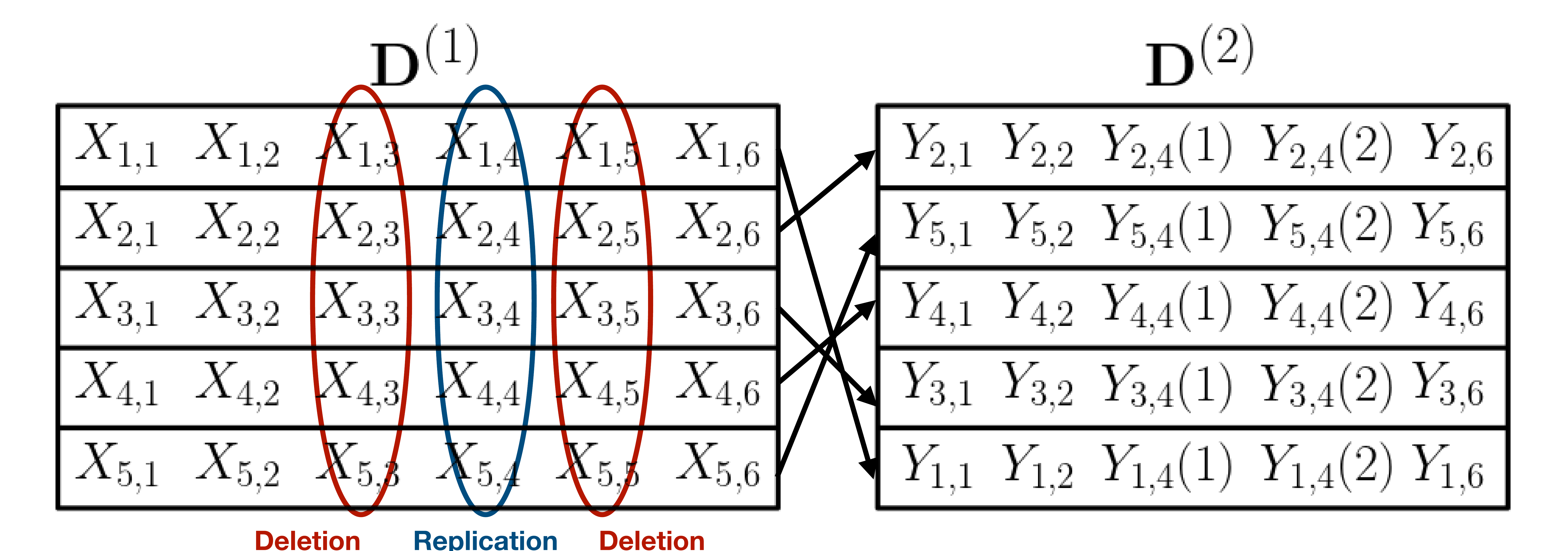}}
\caption{An illustrative example of database matching under noisy column repetitions. The columns circled in red are deleted whereas the fourth column, which is circled in blue, is repeated twice, \emph{i.e.,} replicated. For each $(i,j)$, $Y_{i,j}$ is the noisy observation of $X_{i,j}$. Furthermore, for each $i$, $Y_{i,4}(1)$ and $Y_{i,4}(2)$ are noisy replicas of $X_{i,4}$. Our goal is to estimate the row permutation {${\Theta}_n$} which is in this example given as; {$\boldsymbol{\Theta}_n(1)=5$}, {$\boldsymbol{\Theta}_n(2)=1$}, {$\boldsymbol{\Theta}_n(3)=4$},
{$\boldsymbol{\Theta}_n(4)=3$} and {$\boldsymbol{\Theta}_n(5)=2$}, by matching the rows of $\mathbf{D}^{(1)}$ and $\mathbf{D}^{(2)}$. Here the $i$\textsuperscript{th} row of $\mathbf{D}^{(1)}$ corresponds to the {$\Theta_n(i)$\textsuperscript{th}} row of $\mathbf{D}^{(2)}$.}
\label{fig:intro}
\end{figure}

In~\cite{bakirtas2021database}, motivated by the synchronization errors in the sampling of time-series datasets, we investigated the matching of two databases of the same number of users (rows), but with different numbers of attributes (columns). In our model, one of the databases suffers from \emph{random column deletions}, where the deletion indices are only partially and probabilistically available at the matching side. Under this side information assumption, we derived an achievable database growth rate. Demonstrating the impact of this side information on the achievable rate, we then proposed a \emph{deletion detection} algorithm given a batch of correctly-matched rows, \emph{i.e.,} \emph{seeds} and derived the seed size sufficient to guarantee a non-zero deletion detection probability.

In~\cite{noiselesslonger}, we investigated the matching of Markov databases, thus modeling correlations of the attributes (columns) under noiseless random column repetitions, a non-trivial extension of~\cite{bakirtas2021database}, where the attributes were assumed \emph{i.i.d.}. Under this generalized model, we devised a \emph{column histogram-based} repetition detection algorithm and derived an improved achievable rate, which is equal to the erasure bound~\cite{li2014input}. We then proved a converse showing the tightness of this achievable rate, thereby characterizing the exact matching capacity of Markov database matching under noiseless column repetitions.

In this paper, our goal is to investigate the necessary and the sufficient conditions for the successful matching of database rows under \emph{noisy} column repetitions. We assume a generalized database model where synchronization errors, in the form of column repetitions, are followed by noise, in the form of independent noise on the database entries{, as illustrated in Figure~\ref{fig:intro}}. The presence of noise prevents us from using the column histogram-based repetition detection algorithm of~\cite{noiselesslonger} and unlike~\cite{noiselesslonger} requires \emph{seed} users whose identities are known in both databases~\cite{bakirtas2021database,shirani2017seeded,fishkind2019seeded}. Under these assumptions, we devise two algorithms: one for deletion detection and the other for replica detection. We show that if the seed size $B$ grows linearly with the number of columns $n$, which is assumed to be logarithmic in the number of rows $m_n$ of the database, deletion locations can be extracted from the seeds. Then, we propose a joint typicality-based row matching scheme to derive sufficient {conditions} for successful matching. Finally, we prove a tight converse result, characterizing the matching capacity of the database matching problem under noisy column repetitions. 

The organization of this paper is as follows: Section~\ref{sec:problemformulation} contains the formulation of the problem. In Section~\ref{sec:achievability}, our main result on the matching capacity and its proof are presented. Finally, in Section~\ref{sec:conclusion} the results and ongoing work are discussed.

\noindent{\em Notation:} We denote the set of integers $\{1,...,n\}$ as $[n]$, and matrices with uppercase bold letters. For a matrix $\mathbf{D}$, $D_{i,j}$ denotes the $(i,j)$\textsuperscript{th} entry. Furthermore, by $A^n$, we denote a row vector consisting of scalars $A_1,\dots,A_n$ and the indicator of event $E$ by $\mathds{1}_E$. The logarithms, unless stated explicitly, are in base $2$. When the distinction is clear from the context, we use $\Theta$ to denote either the labeling function or the big theta notation for the asymptotic behavior.
\section{Problem Formulation}
\label{sec:problemformulation}

We use the following definitions, some of which are similar to~\cite{shirani8849392,bakirtas2021database,noiselesslonger}, to formally describe our problem. 

\begin{defn}{\textbf{(Unlabeled Database)}}\label{defn:unlabeleddb}
An ${(m_n,n,p_{X})}$ \emph{unlabeled database} is a randomly generated ${m_n\times n}$ matrix ${\mathbf{D}=\{D_{i,j}\in\mathfrak{X}\}}$ with \emph{i.i.d.} entries drawn according to the distribution $p_X$ with a finite discrete support $\mathfrak{X}=\{1,\dots,|\mathfrak{X}|\}$.
\end{defn}

\begin{defn}{\textbf{(Column Repetition Pattern)}}\label{defn:repetitionpattern}
The \emph{column repetition pattern} $S^n=\{S_1,S_2,...,S_n\}$ is a random vector consisting of $n$ \emph{i.i.d.} entries drawn from a discrete probability distribution $p_S$ with a finite integer support ${\{0,\dots,s_{\max}\}}$.
\end{defn}

\begin{figure}[t]
\centerline{\includegraphics[width=0.5\textwidth]{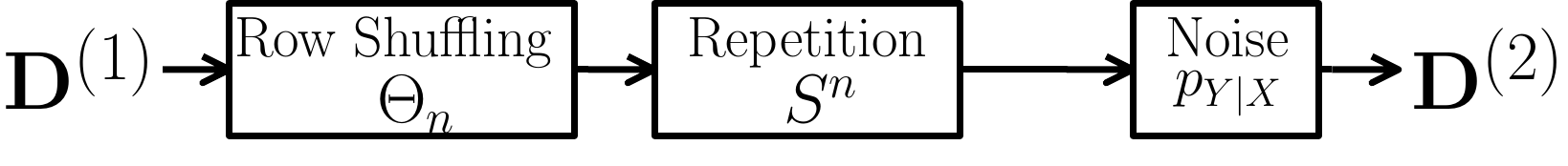}}
\caption{Relation between the unlabeled database $\mathbf{D}^{(1)}$ and the labeled noisy repeated one, $\mathbf{D}^{(2)}$.}
\label{fig:dmc}
\end{figure}

\begin{defn}{\textbf{(Labeled Noisy Repeated Database)}}\label{defn:labeleddb}
Let $\mathbf{D}^{(1)}$ be an ${(m_n,n,p_{X})}$ unlabeled database. Let $S^n$ be the independent repetition pattern, $\boldsymbol{\Theta}_n$ be a uniform permutation of $[m_n]$, independent of $(\mathbf{D}^{(1)},S^n)$ and $p_{Y|X}$ be a conditional probability distribution with both $X$ and $Y$ taking values from $\mathfrak{X}$. Given $\mathbf{D}^{(1)}$, {${S}^n$} and $p_{Y|X}$, $\mathbf{D}^{(2)}$ is called the \emph{labeled noisy repeated database} if the respective $(i,j)$\textsuperscript{th} entries ${D}^{(1)}_{i,j}$ and ${D}^{(2)}_{i,j}$ of $\mathbf{D}^{(1)}$ and $\mathbf{D}^{(2)}$ have the following relation:
\begin{align}
    D^{(2)}_{i,j}&=
    \begin{cases}
      E , &  \text{if } S_{j}=0\\
      Y_i^{S_j} & \text{if } S_{j}\ge 1
    \end{cases} \quad \forall i\in[m_n],\:\forall j\in[n]
\end{align}
where $Y_i^{S_j}$ is a random row vector of length $S_j$ with the following probability distribution, conditioned on $D^{(1)}_{\boldsymbol{\Theta}_n^{-1}(i),j}$
\vspace{-1em}
\begin{align}
    \Pr\left(Y_i^{S_j}=y^{S_j}\Big|{D}^{(1)}_{\boldsymbol{\Theta}_n^{-1}(i),j}\right)&=\prod\limits_{l=1}^{S_j} p_{Y|X}\left(y_l \Big|{D}^{(1)}_{\boldsymbol{\Theta}_n^{-1}(i),j}\right)\label{eq:noiseiid}
\end{align}
where $y^{S_j}=y_1,\dots,y_{S_j}$ and ${{D}^{(2)}_{i,j}=E}$ corresponds to ${D}^{(2)}_{i,j}$ being the empty string.

Note that $S_j$ indicates the times the $j$\textsuperscript{th} column of $\mathbf{D}^{(1)}$ is repeated. When $S_j=0$, the $j$\textsuperscript{th} column of $\mathbf{D}^{(1)}$ is said to be \emph{deleted} and when $S_j>1$, the $j$\textsuperscript{th} column of $\mathbf{D}^{(1)}$ is said to be \emph{replicated}.

The $i$\textsuperscript{th} row of $\mathbf{D}^{(2)}$ is said to correspond to the {$\boldsymbol{\Theta}_n^{-1}(i)$\textsuperscript{th}} row of $\mathbf{D}^{(1)}$, where $\boldsymbol{\Theta}_n$ is called the \emph{labeling function}.
\end{defn}

The relationship between $\mathbf{D}^{(1)}$ and $\mathbf{D}^{(2)}$, as described in Definiton~\ref{defn:labeleddb}, is illustrated in Figure~\ref{fig:dmc}.

Note that \eqref{eq:noiseiid} states that we can treat ${D}^{(2)}_{i,j}$ as the output of the discrete memoryless channel (DMC) $p_{Y|X}$ with input sequence consisting of $S_j$ copies of $\mathbf{D}^{(1)}_{\boldsymbol{\Theta}_n^{-1}(i),j}$ concatenated together. We stress that $p_{Y|X}$ is a general model, capturing any distortion and noise on the database entries, though we only refer to this as \say{noise} in this paper.

As we will discuss in Section~\ref{subsec:replicadetection}, in the noisy setting, inferring the column repetition pattern, particularly deletions, is a harder task compared to the noiseless setting investigated in~\cite{noiselesslonger}. Therefore, we assume the availability of \emph{seeds}, as done in noiseless database matching~\cite{bakirtas2021database} and graph matching \cite{shirani2017seeded, fishkind2019seeded} literatures.

\begin{defn}{\textbf{(Seeds)}}
For the unlabeled and labeled databases in Definitions~\ref{defn:unlabeleddb} and \ref{defn:labeleddb}, a \emph{seed} is a pair of correctly-matched rows. A \emph{batch of $B$ seeds} $(\mathbf{G}^{(1)},\mathbf{G}^{(2)})$ is a pair of databases (sub-matrices) with respective sizes $B\times n$ and $B\times \sum_{j=1}^n S_j$. We assume a polynomial \emph{seed size} $B=\Theta(n^d)$ where $d$ is called the \emph{seed order}.
\end{defn}

\begin{defn}{\textbf{(Successful Matching Scheme)}}
A \emph{matching scheme} is a sequence of mappings {${\phi_n: (\mathbf{D}^{(1)},\mathbf{D}^{(2)},\mathbf{G}^{(1)},\mathbf{G}^{(2)})\mapsto \hat{\boldsymbol{\Theta}}_n }$} where $\mathbf{D}^{(1)}$ is the unlabeled database, $\mathbf{D}^{(2)}$ is the labeled noisy repeated database, $(\mathbf{G}^{(1)},\mathbf{G}^{(2)})$ are seeds and $\hat{\boldsymbol{\Theta}}_n$ is the estimate of the correct labeling function $\boldsymbol{\Theta}_n$. The scheme $\phi_n$ is \emph{successful} if 
\begin{align}
    \Pr\left(\boldsymbol{\Theta}_n(J)\neq\hat{\boldsymbol{\Theta}}_n(J)\right)&\to 0  \text{ as }n\to\infty \label{eq:proberror}
\end{align}
where the index $J$ is drawn uniformly from $[m_n]$.
\end{defn}

Note that for a given column size $n$, as the row size $m_n$ increases, so does the probability of mismatch, as a result of having a larger number of candidates. Thus, in order to characterize the relationship between $m_n$ and $n$, we use the \emph{database growth rate} introduced in~\cite{shirani8849392}. As stated in~\cite[Theorem 1.2]{kunisky2022strong}, for distributions with parameters constant in $n$, the regime of interest is the logarithmic regime where $n\sim \log m_n$. 

\begin{defn}{\textbf{(Database Growth Rate)}}
The \emph{database growth rate} $R$ of an ${(m_n,n,p_X)}$ unlabeled database is defined as 
\begin{align}
    R&=\lim\limits_{n\to\infty} \frac{1}{n}\log m_n.
\end{align}
\end{defn}

\begin{defn}{\textbf{(Achievable Database Growth Rate)}}\label{defn:achievable}
Consider a sequence of ${(m_n,n,p_X)}$ unlabeled databases, a repetition probability distribution $p_S$, a noise distribution $p_{Y|X}$ and the resulting sequence of labeled noisy repeated databases. For a seed order $d$, a database growth rate $R$ is said to be \emph{achievable} if there exists a successful matching scheme when the unlabeled database has growth rate $R$.
\end{defn}

\begin{defn}{\textbf{(Matching Capacity)}}\label{defn:matchingcapacity}
The \emph{matching capacity} $C(d)$ is the supremum of the set of all achievable rates corresponding to a database distribution $p_X$, a repetition probability distribution $p_S$, a noise distribution $p_{Y|X}$ and a seed order $d$.
\end{defn}

In this paper, our goal is to characterize the matching capacity $C(d)$, by providing database matching schemes as well as a tight upper bound on all achievable database growth rates.

\section{Main Result}
\label{sec:achievability}
In this section, we present our main result on the matching capacity under noisy column repetitions (Theorem~\ref{thm:mainresult}) and prove its achievability by proposing a three-step approach: \emph{i)} noisy replica detection and \emph{ii)} deletion detection using seeds, followed by \emph{iii)} a row matching algorithm. Then, we outline the proof of the converse.

\begin{thm}{\textbf{(Matching Capacity Under Noisy Column Repetitions})}\label{thm:mainresult}
Consider a database distribution $p_X$, a column repetition distribution $p_S$ and a noise distribution $p_{Y|X}$. Then, for any seed order ${d\ge1}$, the matching capacity is
\begin{align}
    C(d) &= I(X;Y^S,S) \label{eq:matchingcap}
\end{align}
where $S\sim p_S$ and ${Y^S=Y_1,\dots,Y_S}$ such that
\begin{align}
    \Pr(Y^S=y_1,\dots,y_S|X=x)&=\prod\limits_{i=1}^S p_{Y|X}(y_i|x)
\end{align}
\end{thm}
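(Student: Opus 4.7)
The plan is to prove Theorem~\ref{thm:mainresult} in two parts: achievability via a three-stage scheme that first reconstructs the column repetition pattern $S^n$ and then matches rows using a typicality decoder, and a matching converse via a genie argument.

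For achievability, I would first observe that two consecutive columns of $\mathbf{D}^{(2)}$ originating from the same source column are, conditioned on that source column, row-wise i.i.d.\ outputs of $p_{Y|X}$, whereas two consecutive columns originating from distinct source columns are unconditionally independent with common marginal $p_Y$. Since $m_n$ grows exponentially in $n$, a sample-based hypothesis test (for instance, an empirical-joint-distribution or log-likelihood ratio test) distinguishes these two alternatives with doubly-exponentially small error in $m_n$, so every maximal run of replicas of a single source column can be clustered together with vanishing probability of error even after a union bound over the $n-1$ adjacent pairs. After this replica-detection step, the remaining ambiguity is the placement of the deleted source columns among the $n$ positions. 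For this step I would exploit the seeds as in~\cite{bakirtas2021database}: each candidate deletion pattern induces a specific joint statistic between the rows of $\mathbf{G}^{(1)}$ and the corresponding clustered super-columns of $\mathbf{G}^{(2)}$ through the channel $p_{Y|X}$, and a typicality test on the $B=\Theta(n^d)$ seed rows suffices to reject every incorrect placement by a union bound, provided $B$ grows at least linearly in $n$, i.e.\ $d\ge 1$. Finally, with $S^n$ known, the matching problem reduces to a channel-decoding-style task with effective channel $p_{Y^S,S|X}$ applied independently across the $n$ source columns; a joint-typicality decoder that matches row $i$ of $\mathbf{D}^{(1)}$ to the unique row of $\mathbf{D}^{(2)}$ that is $(X,Y^S,S)$-jointly typical with it succeeds by a union bound over the $m_n=2^{nR}$ candidate rows whenever $R<I(X;Y^S,S)$.

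For the converse, I would reveal the repetition pattern $S^n$ to the decoder as genie side information, which can only help. Conditioned on $S^n$, the problem reduces to identifying a uniform permutation under the memoryless channel $p_{Y^S,S|X}$ applied coordinate-wise, and a Fano-plus-data-processing argument analogous to the channel-coding converse in~\cite{shirani8849392} then yields $R\le I(X;Y^S,S)$, matching the achievability bound and proving the theorem.

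The step I expect to be the main obstacle is deletion detection. Replica detection produces a random clustering whose occasional errors propagate, and the deletion pattern must then be recovered \emph{exactly} rather than approximately, since a single column misalignment would destroy the typicality structure in the row-matching stage. Controlling the error probability uniformly over all possible placements of the deletions while using only $O(n^d)$ seed rows requires a careful union bound against exponentially many wrong alignments, and pinning down why $d=1$ already suffices---rather than some larger power of $n$---is the delicate information-theoretic point at which the seed assumption becomes essential.
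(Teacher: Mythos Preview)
Your proposal is correct and follows essentially the same architecture as the paper: replica detection via a pairwise hypothesis test on consecutive columns of $\mathbf{D}^{(2)}$, seeded deletion detection, typicality-based row matching, and a genie-aided Fano converse.

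The paper differs only in the choice of test statistics. For replica detection it uses a Hamming-distance threshold rather than an LLR or empirical-distribution test, which requires a short lemma showing that $\Pr(Y_1\neq Y_2\mid X_1\indep X_2)>\Pr(Y_1\neq Y_2\mid X_1=X_2)$ whenever $p_{X,Y}\neq p_Xp_Y$. For deletion detection it \emph{minimizes} total Hamming distance over all candidate deletion patterns (after applying a bijective alphabet remapping $\sigma$ when necessary to ensure correlated pairs agree more often than independent ones) rather than thresholding a typicality statistic per pattern. This minimization framing is what cleanly handles the near-correct patterns that differ from the truth in only a few columns---precisely the obstacle you flag---because the Hamming-distance \emph{difference} between a wrong pattern and the true one is a sum of $\Lambda_n\cdot r$ i.i.d.\ terms when $r$ columns are misaligned, so Hoeffding gives an error bound that decays in $\Lambda_n r$ and survives the union bound over $\binom{n}{\hat K}$ patterns once $\Lambda_n=\Theta(n)$. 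Your typicality-threshold idea would need a comparable stratification by number of misalignments to go through.

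One small slip: the replica-detection error is exponentially small in $m_n$, hence doubly exponentially small in $n$; it is not doubly exponentially small in $m_n$.
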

Theorem~\ref{thm:mainresult} states that although the repetition pattern $S^n$ is not known a-priori, given a seed order $d\ge1$, we can achieve a database growth rate as if we knew $S^n$. Since the utility of seeds increase with the seed order $d$, we will focus on $d=1$, which we show is sufficient to achieve the matching capacity. {As we discuss in Section~\ref{subsec:converse}, the converse result holds for any seed size, whereas a general achievability result for the noisy case with $d<1$ requires additional combinatorial arguments and is omitted due to the space constraints.}

\begin{rem}{(\textbf{Noiseless Setting})}\label{cor:noiseless}
Using~\cite[Corollary 1]{noiselesslonger}, we can argue that in the noiseless setting, where
\begin{align}
    p_{Y|X}(y|x) &=\mathds{1}_{[y=x]}\:\forall x\in\mathfrak{X}
\end{align}
we have
\begin{align}
    C(d) = (1-\delta) H(X)
\end{align}
for any seed order $d$, where $\delta\triangleq p_S(0)$ is the deletion probability. Furthermore, we show in~\cite{noiselesslonger} that in the noiseless setting ${Y^S=X \otimes 1^S}$, the replicas do not offer any additional information. Thus, for any seed order $d\ge1$, Theorem~\ref{thm:mainresult} agrees with~\cite[Corollary 1]{noiselesslonger} in the noiseless setting with \emph{i.i.d.} columns. 
\end{rem}

\begin{rem}{(\textbf{No Synchronization Errors})}\label{cor:nosync}
As discussed in~\cite[Corollary 1]{shirani8849392}, when there are no synchronization errors, \emph{i.e.,} $p_S(1)=1$,
we have
\begin{align}
    C(d) = I(X;Y)
\end{align}
for any seed order $d$. Thus, under no synchronization errors, for any seed order $d\ge1$, Theorem~\ref{thm:mainresult} agrees with~\cite[Corollary 1]{shirani8849392}.
\end{rem}

The rest of this section is on the proof of Theorem~\ref{thm:mainresult}. In Section~\ref{subsec:replicadetection}, we discuss our noisy replica detection algorithm and prove its asymptotic performance. In Section~\ref{subsec:seededdeletiondetection}, we introduce a deletion detection algorithm which uses seeds and derive a seed size sufficient for an asymptotic performance guarantee. Then, in Section~\ref{subsec:matchingscheme}, we combine these two algorithms and prove the achievability of Theorem~\ref{thm:mainresult} by generalizing the rowwise matching scheme proposed in~\cite{noiselesslonger} to the noisy scenario. Finally, in Section~\ref{subsec:converse} we present the outline of the proof of the converse of Theorem~\ref{thm:mainresult}.

Note that when the two databases are independent, Theorem~\ref{thm:mainresult} states that the matching capacity becomes zero, hence our results trivially hold. Hence throughout this section, we assume that the two databases are not independent.

\subsection{Noisy Replica Detection}\label{subsec:replicadetection}
We propose to detect the replicas by extracting permutation-invariant features of the columns of $\mathbf{D}^{(2)}$. Our algorithm only considers the columns of $\mathbf{D}^{(2)}$ and as such, can only detect replications, not deletions. Furthermore, we stress that our replica detection algorithm does not require any seeds. 

In~\cite{noiselesslonger}, we chose the histogram of each column as its permutation-invariant feature, proved that the asymptotic uniqueness of the histograms and matched the column histograms of $\mathbf{D}^{(1)}$ and $\mathbf{D}^{(2)}$ to infer the repetition pattern. In the noisy setup, although still asymptotically-unique, the column histograms of the two databases cannot be matched due to noise. Joint typicality arguments do not work either, since arbitrary pairs of column histograms are likely to be jointly typical, even though the columns are independent. Therefore, we propose a replica detection algorithm which only considers $\mathbf{D}^{(2)}$ and adopts the \emph{Hamming distance between consecutive columns} of $\mathbf{D}^{(2)}$ as the permutation-invariant feature.

Let $K$ denote the number of columns of $\mathbf{D}^{(2)}$, $C^{m_n}_j$ denote the $j$\textsuperscript{th} column of $\mathbf{D}^{(2)}$, $j=1,\dots,K$. Our replica detection algorithm works as follows: We first compute the Hamming distances $d_H(C^{m_n}_j,C^{m_n}_{j+1})$ between $C^{m_n}_j$ and $C^{m_n}_{j+1}$, for {$j\in[K-1]$}.
For some average Hamming distance threshold $\tau$ chosen based on $p_{X,Y}$, the algorithm decides that $C^{m_n}_{j}$ and $C^{m_n}_{j+1}$ are replicas only if $d_H(C^{m_n}_{j},C^{m_n}_{j+1})<m_n \tau$, and independent otherwise. In the following lemma, we show that this algorithm can infer the replicas with high probability. 

\begin{lem}{\textbf{(Noisy Replica Detection)}}\label{lem:noisyreplicadetection}
Let $E_j$ denote the event that the Hamming distance based algorithm described above fails to infer the correct relationship between the columns $C^{m_n}_{j}$ and $C^{m_n}_{j+1}$ of $\mathbf{D}^{(2)}$, $j=1,\dots,K-1$. Then
\begin{align}
    \Pr(\bigcup\limits_{j=1}^{K-1} E_j)&\to 0\text{ as }n\to\infty
\end{align}
\end{lem}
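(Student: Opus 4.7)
\textbf{Proof plan for Lemma~\ref{lem:noisyreplicadetection}.}
The plan is to show that, with an appropriately chosen threshold $\tau$, the per-row Hamming disagreement rate between any two consecutive columns of $\mathbf{D}^{(2)}$ concentrates exponentially in $m_n$ about one of two well-separated values, so that the algorithm is correct on every one of the at most $ns_{\max}$ pairs simultaneously.

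First I would compute the two candidate per-row disagreement probabilities. If $C_j^{m_n}$ and $C_{j+1}^{m_n}$ are noisy replicas of the same column of $\mathbf{D}^{(1)}$, then by Definition~\ref{defn:labeleddb} and the row-wise i.i.d.\ structure of Definition~\ref{defn:unlabeleddb}, each row contributes to $d_H(C_j^{m_n},C_{j+1}^{m_n})$ an independent $\mathrm{Bernoulli}(p_{\mathrm{rep}})$ indicator with
\begin{align*}
p_{\mathrm{rep}} \;=\; \sum_{x}p_X(x)\sum_{y_1\neq y_2}p_{Y|X}(y_1|x)\,p_{Y|X}(y_2|x);
\end{align*}
if instead the two columns stem from distinct (and hence independent) columns of $\mathbf{D}^{(1)}$, each row contributes an independent $\mathrm{Bernoulli}(p_{\mathrm{ind}})$ with $p_{\mathrm{ind}}=\sum_{y_1\neq y_2}p_Y(y_1)p_Y(y_2)$.

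Next I would establish the strict separation $p_{\mathrm{rep}}<p_{\mathrm{ind}}$. Jensen's inequality applied to the convex map $t\mapsto t^2$ gives $\sum_x p_X(x)p_{Y|X}(y|x)^2 \ge p_Y(y)^2$ for every $y$, and summing over $y$ yields $1-p_{\mathrm{rep}}\ge 1-p_{\mathrm{ind}}$, with equality only if $p_{Y|X}(y|\cdot)$ is constant on the support of $p_X$ for every $y$, i.e., only if $X$ and $Y$ are independent. The non-independence assumption recorded immediately before Section~\ref{subsec:replicadetection} rules this out, so the gap is strict. Picking any $\tau\in(p_{\mathrm{rep}},p_{\mathrm{ind}})$, for instance $\tau=\tfrac12(p_{\mathrm{rep}}+p_{\mathrm{ind}})$, the algorithm is correct on the pair $(j,j+1)$ iff the empirical mean $m_n^{-1}d_H(C_j^{m_n},C_{j+1}^{m_n})$ lands on the correct side of $\tau$. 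A Chernoff--Hoeffding bound then yields
\begin{align*}
\Pr(E_j)\;\le\;2\exp(-c\,m_n),
\end{align*}
for some constant $c=c(p_{X,Y},\tau)>0$, uniformly over the type of the pair and over any realization of $S^n$.

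Finally, since $K=\sum_{j=1}^n S_j\le n s_{\max}$, a union bound over the at most $K-1$ consecutive pairs gives
\begin{align*}
\Pr\Big(\bigcup_{j=1}^{K-1}E_j\Big)\;\le\;2ns_{\max}\exp(-c\,m_n),
\end{align*}
which vanishes as $n\to\infty$; in the logarithmic regime $n\sim\log m_n$ of Theorem~\ref{thm:mainresult}, $m_n$ is exponential in $n$, so this bound is in fact doubly exponentially small. The main obstacle is the strict separation step, which crucially relies on the non-independence of the databases; without that hypothesis we would have $p_{\mathrm{rep}}=p_{\mathrm{ind}}$ and no single threshold could discriminate the two cases. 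Beyond this, the proof reduces to a routine concentration-plus-union-bound argument.
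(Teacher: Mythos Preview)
Your proposal is correct and follows essentially the same approach as the paper: identify the two Binomial parameters $p_{\mathrm{rep}}$ and $p_{\mathrm{ind}}$ (the paper's $p_1$ and $p_0$), prove the strict gap under the non-independence assumption, then apply a Chernoff bound per pair and a union bound over $O(n)$ pairs against the exponential-in-$m_n$ decay. The only cosmetic difference is that you invoke Jensen's inequality for $t\mapsto t^2$ to obtain $\sum_x p_X(x)p_{Y\mid X}(y\mid x)^2\ge p_Y(y)^2$, whereas the paper writes out the equivalent variance expression $\psi(y)=\sum_x p_X(x)\bigl[p_{Y\mid X}(y\mid x)-p_Y(y)\bigr]^2\ge 0$ explicitly; these are the same inequality, and the equality case (independence of $X$ and $Y$) is identified identically.
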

\begin{proof}
Let $(X_1,Y_1),(X_2,Y_2)\sim p_{X,Y}$ be two pairs of random variables. We define
\begin{align}
    p_0&\triangleq \Pr(Y_1\neq Y_2|X_1 \indep X_2)\\
    p_1&\triangleq \Pr(Y_1\neq Y_2|X_1=X_2)
\end{align}
Observe that $Y_1$ and $Y_2$ are noisy observations of independent database entries $X_1,\:X_2$ when $X_1 \indep X_2$ and $Y_1$ and $Y_2$ are noisy replicas when $X_1=X_2$.
We can rewrite $p_0$ and $p_1$ as the following.
 \begin{align}
    p_0 &=\sum\limits_{x_1\in\mathfrak{X}}\sum\limits_{x_2\in\mathfrak{X}}\sum\limits_{y\in\mathfrak{X}} p_X(x_1) p_X(x_2) p_{Y|X}(y|x_1) \left[1-p_{Y|X}(y|x_2)\right]\\
    &=\sum\limits_{x_1\in\mathfrak{X}}\sum\limits_{y\in\mathfrak{X}} p_X(x_1)  p_{Y|X}(y|x_1)\sum\limits_{x_2\in\mathfrak{X}}p_X(x_2) \left[1-p_{Y|X}(y|x_2)\right]\\
    &=\sum\limits_{x\in\mathfrak{X}}\sum\limits_{y\in\mathfrak{X}} p_X(x)  p_{Y|X}(y|x)\left[1-p_Y(y)\right]\\
    p_1 &=\sum\limits_{x\in\mathfrak{X}}\sum\limits_{y\in\mathfrak{X}} p_X(x) p_{Y|X}(y|x) \left[1-p_{Y|X}(y|x)\right]
\end{align}
Thus, we have 
\begin{align}
    p_0 - p_1 &=\sum\limits_{x\in\mathfrak{X}}\sum\limits_{y\in\mathfrak{X}} p_{X,Y}(x,y)\left[p_{Y|X}(y|x)-p_Y(y)\right]
\end{align}
For every $y\in\mathfrak{X}$, let
\begin{align}
    \psi(y) &\triangleq \sum\limits_{x\in\mathfrak{X}}p_X(x)\left[p_{Y|X}(y|x)-p_Y(y)\right]^2\\
    &=\sum\limits_{x\in\mathfrak{X}}p_X(x)\left[p_{Y|X}(y|x)-\sum\limits_{z\in\mathfrak{X}}p_{Y|X}(y|z)p_X(z)\right]^2\\
    &\ge 0 \label{eq:psinonnegative}
\end{align}
where \eqref{eq:psinonnegative} follows from the non-negativity of the square term in the summation. It must be noted that $\psi(y)=0$ only if $p_{Y|X}(y|x)=p_Y(y) \forall x\in\mathfrak{X}$ with $p_X(x)>0$.

Now, expanding the square term, we obtain
\begin{align}
    \psi(y)&= \sum\limits_{x\in\mathfrak{X}}p_X(x) p_{Y|X}(y|x)^2-2 p_Y(y) \sum\limits_{x\in\mathfrak{X}}p_X(x) p_{Y|X}(y|x)\notag\\
    &\qquad+\sum\limits_{x\in\mathfrak{X}}p_X(x) p_{Y}(y)^2\\
    &= \sum\limits_{x\in\mathfrak{X}}p_X(x) p_{Y|X}(y|x)^2 - 2 p_Y(y)^2 +p_Y(y)^2\\
    &= \sum\limits_{x\in\mathfrak{X}}p_X(x) p_{Y|X}(y|x)^2 - p_Y(y)^2
\end{align}
Now, we rewrite $p_0-p_1$ as 
\begin{align}
    p_0-p_1&=\sum\limits_{y\in\mathfrak{X}}\sum\limits_{x\in\mathfrak{X}} p_{X,Y}(x,y)\left[p_{Y|X}(y|x)-p_Y(y)\right]\\
    &= \sum\limits_{y\in\mathfrak{X}}\left[\left(\sum\limits_{x\in\mathfrak{X}} p_X(x) p_{Y|X}(y|x)^2\right)-p_Y(y)^2\right]\\
    &= \sum\limits_{y\in\mathfrak{X}} \psi(y)\\
    &\ge 0
\end{align}
with $p_0-p_1=0$ only when $p_{Y|X}(y|x)=p_Y(y)\forall x,y\in\mathfrak{X}$. In other words, $p_0>p_1$ as long as the two databases are not independent.

Choose any $\tau\in(p_1,p_0)$ bounded away from both $p_0$ and $p_1$. Let $A_j$ denote the event that $C^{m_n}_{j}$ and $C^{m_n}_{j+1}$ are replicas and $B_j$ denote the event that the algorithm detects $C^{m_n}_{j}$ and $C^{m_n}_{j+1}$ as replicas. From the union bound,
\begin{align}
    \Pr(\bigcup\limits_{j=1}^{K-1} E_j)&\le \sum\limits_{j=1}^{K-1} \Pr(A_j ^c) \Pr(B_j|A_j ^c)+ \Pr(A_j)  \Pr(B_j^c|A_j)\label{eq:replicadetectionbound}
\end{align}
Note that conditioned on $A_j^c$, $d_H(C^{m_n}_j,C^{m_n}_{j+1})\sim\text{Binom}(m_n,p_0)$ and conditioned on $A_j$, $d_H(C^{m_n}_j,C^{m_n}_{j+1})\sim\text{Binom}(m_n,p_1)$. Then, from Chernoff bound~\cite[Theorem 1]{hoeffding1994probability}, we get
\begin{align}
    \Pr(B_j|A_j ^c)&\le {2}^{-m_n D\left(\tau\|p_0\right)}\label{eq:chernoff1}\\
    \Pr(B_j^c|A_j)&\le {2}^{-m_n D\left((1-\tau)\|1-p_1\right)}\label{eq:chernoff2}
\end{align}
{where $D(.\|.)$ denotes the Kullback-Leibler divergence~\cite[Chapter 2.3]{cover2006elements} between two Bernoulli distributions with given parameters.}
Thus, we get
\begin{align}
    \Pr(\bigcup\limits_{j=1}^{K-1} E_j)&\le (K-1)\left[ {2}^{-m_n D\left(\tau\|p_0\right)}+( {2}^{-m_n D\left((1-\tau)\|1-p_1\right)}\right]
\end{align}
Observing that RHS of \eqref{eq:replicadetectionbound} has $2K-2=O(n)$ terms decaying exponentially in~$m_n$ and $n\sim\log m_n$ concludes the proof. 
\end{proof}

\subsection{Deletion Detection Using Seeds}\label{subsec:seededdeletiondetection}
e propose to detect deletions using seeds. Let $(\mathbf{G}^{(1)},\mathbf{G}^{(2)})$ be a batch of $B=\Theta(n^d)$ seeds. Our deletion detection algorithm works as follows: After finding the replicas as {in} Section~\ref{subsec:replicadetection}, we discard all-but-one of the noisy replicas from $\mathbf{G}^{(2)}$, to obtain $\tilde{\mathbf{G}}^{(2)}$ whose column size is denoted by $\tilde{K}$. At this step, we only have deletions. 

We adopt an exhaustive search over all potential deletion patterns with $n-\tilde{K}$ deletions on $\mathbf{G}^{(1)}$. For each deletion pattern $I$, we compute the total Hamming distance $d_H(\tilde{\mathbf{G}}^{(1)}(I),\tilde{\mathbf{G}}^{(2)})$ between $\tilde{\mathbf{G}}^{(1)}(I)$ and $\tilde{\mathbf{G}}^{(2)}$, where ${\tilde{\mathbf{G}}^{(1)}(I)}$ denotes the matrix obtained by discarding the columns whose indices lie in $I$ from $\mathbf{G}^{(1)}$. More formally, we compute
\begin{align}
    d_H(\tilde{\mathbf{G}}^{(1)}(I),\tilde{\mathbf{G}}^{(2)}) &= \sum_{i\in[m_n]}\sum_{j\in[n-\tilde{K}]} {\mathds{1}_{\left[\tilde{{G}}^{(1)}(I)_{i,j}\neq \tilde{{G}}^{(2)}_{i,j}\right]}}
\end{align}
Then, the algorithm outputs the deletion pattern minimizing total Hamming distance between $\tilde{\mathbf{G}}^{(1)}(I)$ and $\tilde{\mathbf{G}}^{(2)}$, denoted by $\hat{I}_{\text{del}}$. In other words, 
\begin{align}
    \hat{I}_{\text{del}} = \argmin\limits_{I\subseteq [n], |I|=n-\tilde{K}} d_H(\tilde{\mathbf{G}}^{(1)}(I),\tilde{\mathbf{G}}^{(2)})
\end{align}

Note that such a strategy depends on pairs of correlated entries in $\mathbf{G}^{(1)}$ and $\tilde{\mathbf{G}}^{(2)}$ having a higher probability of being equal than independent pairs. More formally, given a correlated pair ${(X_1,Y_1)}\sim p_{X,Y}$, and an independent pair ${(X_2,Y_1)}\sim p_{X}p_{Y}$ we need
\begin{align}
    \Pr(Y_{1}= X_{1})>\Pr(Y_{1}= X_{2})\label{eq:conditiondeletiondetection}
\end{align}
which is not true in general. 

For example, suppose ${\mathfrak{X}=\{0,1\}}$ with ${p_X(0)=1/2}$ and $p_{Y|X}$ follows BSC($q$), \emph{i.e.} ${p_{Y|X}(x|x)=1-q}$, ${x=0,1}$. Note that when ${q>1/2}$ \eqref{eq:conditiondeletiondetection} is not satisfied. However, we can flip the output bits, by applying the bijective remapping ${\sigma=\left(\begin{smallmatrix}
1 & 2\\
2 & 1
\end{smallmatrix}\right)}$ to $Y$ in order to satisfy~\eqref{eq:conditiondeletiondetection}. 

Thus, as long as such a bijective remapping ${\sigma:\mathfrak{X}\to\mathfrak{X}}$ satisfying \eqref{eq:conditiondeletiondetection} exists, we can use the aforementioned deletion detection algorithm.
Now, suppose that such a mapping $\sigma$ exists. We apply $\sigma$ to the entries of $\tilde{\mathbf{G}}^{(2)}$ to construct $\tilde{\mathbf{G}}_{\sigma}^{(2)}$. Then, our deletion detection algorithm computes $d_H(\tilde{\mathbf{G}}^{(1)}(I),\tilde{\mathbf{G}}_{\sigma}^{(2)})$ for each potential deletion pattern $I$ and outputs the pattern $\hat{I}_{\text{del}}(\sigma)$ minimizing it. In other words, 
\begin{align}
    \hat{I}_{\text{del}}(\sigma) = \argmin\limits_{I\subseteq [n], |I|=n-\tilde{K}} d_H(\tilde{\mathbf{G}}^{(1)}(I),\tilde{\mathbf{G}}_{\sigma}^{(2)})
\end{align}

The following lemma states that such a bijective mapping $\sigma$ exists and for a seed order $d \ge 1$, this algorithm can infer the deletion locations with high probability.

\begin{lem}{\textbf{(Seeded Deletion Detection)}}\label{lem:seededdeletiondetection}
 For a repetition pattern ${S}^n$, let ${I_\text{del}=\{j\in[n]|S_j=0\}}$. Then there exists a bijective mapping $\sigma$ depending on $p_{X,Y}$ satisfying \eqref{eq:conditiondeletiondetection} and for seed order $d=1$, 
 \vspace{-0.5em}
\begin{align}
    \Pr\left(\hat{I}_\text{del}(\sigma)=I_\text{del}\right)&\to 1\text{ as }n\to\infty
\end{align}
\end{lem}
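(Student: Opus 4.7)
The plan is to prove Lemma~\ref{lem:seededdeletiondetection} in two stages: first, establish existence of the bijection $\sigma$ satisfying \eqref{eq:conditiondeletiondetection}; second, analyze the minimum-Hamming-distance deletion estimator through a column-wise decomposition combined with Hoeffding's inequality and a union bound over candidate deletion patterns.

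For existence, write $\Pr(\sigma(Y_1)=X_1)-\Pr(\sigma(Y_1)=X_2)=\langle P_\sigma, A\rangle$, where $A_{x,y}=p_{X,Y}(x,y)-p_X(x)p_Y(y)$ and $P_\sigma$ is the permutation matrix associated with $\sigma$. Since the two databases are not independent, $A\neq 0$. Moreover, $A$ has identically zero row and column sums, so it lies in the linear span of $\{P_\sigma:\sigma\in S_{|\mathfrak{X}|}\}$. Summing over all $\sigma$ gives $\sum_\sigma\langle P_\sigma, A\rangle=(|\mathfrak{X}|-1)!\sum_{x,y}A_{x,y}=0$, so if every $\langle P_\sigma, A\rangle\le 0$ then all must vanish. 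This forces $A\perp \mathrm{span}\{P_\sigma\}\ni A$, giving $\langle A,A\rangle=0$ and $A=0$, a contradiction. Hence some $\sigma^\star$ attains $\langle P_{\sigma^\star},A\rangle>0$, which is exactly \eqref{eq:conditiondeletiondetection}.

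For the deletion detection step, fix $\sigma=\sigma^\star$, condition on the high-probability event of Lemma~\ref{lem:noisyreplicadetection} that the replicas in $\mathbf{G}^{(2)}$ have been correctly collapsed, and write $[n]\setminus I_\text{del}=\{j_1^\star<\dots<j_{\tilde K}^\star\}$ and $[n]\setminus I=\{j_1<\dots<j_{\tilde K}\}$ for each candidate pattern $I$. Let $N(I)=\{l\in[\tilde K]:j_l\neq j_l^\star\}$. For $l\notin N(I)$ the $l$-th columns of $\tilde{\mathbf{G}}^{(1)}(I)$ and $\tilde{\mathbf{G}}^{(1)}(I_\text{del})$ coincide, so their contributions to the two Hamming distances cancel. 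For $l\in N(I)$ the $l$-th column of $\tilde{\mathbf{G}}^{(1)}(I)$ is an \emph{independent} column of $\mathbf{G}^{(1)}$ relative to the $l$-th column of $\tilde{\mathbf{G}}_\sigma^{(2)}$, which is a noisy $\sigma$-mapped copy of column $j_l^\star$. With $q_0\triangleq\Pr(\sigma(Y_1)\neq X_2)$ and $q_1\triangleq\Pr(\sigma(Y)\neq X)$, the column-$l$ increment $\Delta_l$ is a sum of $B$ i.i.d.\ $\{-1,0,1\}$-valued row contributions with mean $q_0-q_1>0$, so Hoeffding's inequality yields $\Pr(\Delta_l\le 0)\le 2^{-cB}$ for some constant $c>0$ depending on $p_{X,Y}$ and $\sigma^\star$.

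Because $I\neq I_\text{del}$ forces $|N(I)|\ge 1$, the pigeonhole fact that $\sum_{l\in N(I)}\Delta_l\le 0$ implies $\min_l\Delta_l\le 0$, combined with a union bound over $l$, gives a per-pattern error probability at most $n\cdot 2^{-cB}$. Since there are at most $\binom{n}{\tilde K}\le 2^n$ candidate patterns, a final union bound over $I\neq I_\text{del}$ yields $\Pr(\hat I_\text{del}(\sigma)\neq I_\text{del})\le n\cdot 2^{n-cB}+o(1)$, which vanishes for $B=\alpha n$ with $\alpha>1/c$, i.e., at seed order $d=1$. The principal technical obstacle is balancing the $2^n$-size enumeration over candidate deletion patterns against per-pattern concentration that is only linear in $B$; the column-wise decomposition of the Hamming-distance difference is what closes the proof, replacing a joint analysis of possibly correlated column contributions by clean independent-per-column row sums.
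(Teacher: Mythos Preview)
Your proof is correct and follows the same two-part skeleton as the paper (existence of $\sigma$ via a sum-over-permutations argument, then Hoeffding plus a union bound over candidate deletion patterns), but the internal mechanics differ in two places worth noting.

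For the existence of $\sigma$, the paper computes $\sum_\sigma(q_0(\sigma)-q_1(\sigma))=0$ directly and then argues, by examining one-cycle permutations, that $q_0(\sigma)=q_1(\sigma)$ for all $\sigma$ forces $p_{Y|X}=p_Y$. Your Birkhoff-type argument, observing that $A=p_{X,Y}-p_Xp_Y$ has zero row and column sums and therefore lies in $\mathrm{span}\{P_\sigma\}$, is more concise: orthogonality to the span immediately gives $\|A\|^2=0$. This is a cleaner route to the same conclusion.

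For the detection analysis, the paper treats the full Hamming-distance difference as a difference of two Binomial variables with $\Lambda_n(\hat K-f(I,I_{\text{del}}))$ trials and applies Hoeffding jointly, then groups the patterns $I$ by overlap size $f$ and sums a geometric series; this yields a per-pattern bound that \emph{improves} with $|N(I)|$. Your column-wise decomposition instead uses the inclusion $\{\sum_{l\in N(I)}\Delta_l\le 0\}\subseteq\bigcup_l\{\Delta_l\le 0\}$ and bounds each $\Pr(\Delta_l\le 0)$ separately. This sacrifices the gain in $|N(I)|$ (your per-pattern bound is a flat $n\cdot 2^{-cB}$), forcing the crude $2^n$ enumeration, but it cleanly sidesteps any dependence among the column contributions (which can arise in the paper's ``difference of Binomials'' framing when $j_l$ coincides with some $j_{l'}^\star$). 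Both routes land at $B=\Theta(n)$, i.e., seed order $d=1$, so the conclusion is identical.
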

\begin{proof}
We first prove the existence of such a bijective mapping $\sigma$, satisfying \eqref{eq:conditiondeletiondetection}. For all $\sigma$, let
\begin{align}
    q_0(\sigma)&=\Pr(\sigma(Y_1)\neq X_2)\notag\\
    &\triangleq \sum\limits_{x_1\in\mathfrak{X}}\sum\limits_{x_2\in\mathfrak{X}}p_X(x_1) p_X(x_2)[1-p_{Y|X}(\sigma^{-1}(x_2)|x_1)]\\
    q_1(\sigma)&\triangleq\Pr(\sigma(Y_1)\neq X_1)\notag\\
    &= \sum\limits_{x\in\mathfrak{X}}p_X(x) [1-p_{Y|X}(\sigma^{-1}(x)|x)]
\end{align}
Here, our goal is to show that there exists at least one $\sigma$ satisfying
\begin{align}
    q_0(\sigma)>q_1(\sigma)\label{eq:q0biggerthanq1}
\end{align}

We first prove 
\begin{align}
    \sum\limits_{\sigma} q_0(\sigma)-q_1(\sigma)=0\label{eq:q0q1overphizero}
\end{align}
where the summation is over all permutations $\sigma$. For brevity, let 
\begin{align}
    P_{i,j}\triangleq p_{Y|X}(j|i)\quad \forall i,j\in\mathfrak{X} \label{eq:definep}
\end{align}

Note that from \eqref{eq:definep}, we have
\begin{align}
    \sum\limits_{j=1}^{|\mathfrak{X}|}& P_{i,j}=1\quad \forall i\in \mathfrak{X}\label{eq:pijsumto1}\\
    \sum\limits_{i=1}^{|\mathfrak{X}|}&\sum\limits_{j=1}^{|\mathfrak{X}|} P_{i,j}=|\mathfrak{X}|\label{eq:pijsumtoX}
\end{align}
Taking the sum over all $\sigma$, we obtain
\begin{align}
    \sum\limits_{\sigma} q_0(\sigma)-{q_1(\sigma)}
    &= \sum\limits_{\sigma}\sum\limits_{i=1}^{|\mathfrak{X}|}\sum\limits_{j=1}^{|\mathfrak{X}|} p_X(i) p_X(j) P_{i,\sigma^{-1}(j)}\notag \\ &\hspace{4em}-\sum\limits_{\sigma}\sum\limits_{i=1}^{|\mathfrak{X}|} p_X(i) P_{i,\sigma^{-1}(i)}\label{eq:q0q1sumoverphi}
\end{align}
Combining \eqref{eq:pijsumto1}-\eqref{eq:q0q1sumoverphi}, it can be shown that both terms on the RHS of \eqref{eq:q0q1sumoverphi} are equal to $(|\mathfrak{X}|-1)!$. Thus, we have proved~\eqref{eq:q0q1overphizero}. 

Now, we only need to show that
\begin{align}
    \exists \sigma\quad q_0(\sigma)-q_1(\sigma)\neq 0
\end{align}
Considering several one-cycle permutations over $\mathfrak{X}$, one can show that
\begin{align}
    q_0(\sigma)-q_1(\sigma)= 0\hspace{0.5em} \forall \sigma \iff p_{Y|X}(y|x)=p_Y(y)\hspace{0.5em} \forall (x,y)\in\mathfrak{X}^2 \label{eqn:phicondition}
\end{align}
We have assumed the databases are not independent, \emph{i.e.,} $p_{X,Y}\neq p_X p_Y$. Thus, there exists a bijective mapping $\sigma$ satisfying~\eqref{eq:q0biggerthanq1}.

Now choose such a mapping $\sigma$. Let ${\hat{K}=\sum_{j=1}^n \mathbbm{1}_{[S_j\neq 0]}}$ and $\Lambda_n$ be the seed size. Let $\epsilon>0$ and declare error if $\hat{K}\notin[(1-\delta-\epsilon)n,(1-\delta+\epsilon)n]$ whose probability is denoted by $\kappa_n$. Then, we use the union bound to obtain
\vspace{-1em}
\begin{adjustwidth}{-0.3cm}{0pt}
\begin{align}
    \Pr&\left(\hat{I}_\text{del}(\sigma)\neq I_\text{del}\right)\le \kappa_n+ \notag\\
    &\sum\limits_{I\subseteq [n], |I|=\hat{K}}\Pr(d_H(\tilde{\mathbf{G}}^{(1)}(I),\tilde{\mathbf{G}}_{\sigma}^{(2)})\le d_H(\tilde{\mathbf{G}}^{(1)}(I_{\text{del}}),\tilde{\mathbf{G}}_{\sigma}^{(2)}) )\label{eq:deletiondetection}
\end{align}
\end{adjustwidth}
where the difference of the total Hamming distances in \eqref{eq:deletiondetection} can be written as the difference of two Binomial random variables with a common number of trials depending on the size of the overlap between $I$ and $I_{\text{del}}$. 

Specifically, denote by $f(I,I_{\text{del}})$ the number of overlapping elements between $[n]\setminus I$ and $[n]\setminus I_{\text{del}}$. Here we count the overlaps as follows: We count $i_1\in ([n]\setminus I)\bigcap [n]\setminus I_{\text{del}}$ as an overlapping element only if $i_1$ is in the same position in each one of the ordered sets $i_1\in ([n]\setminus I)$ and $[n]\setminus I_{\text{del}}$. For example, let $n=3$, $I=\{1\}$, $I_{\text{del}}=\{3\}$. Then we have $[n]\setminus I=\{2,3\}$ and $[n]\setminus I_{\text{del}}=\{1,2\}$. Note that even though the element $2$ is present in both sets, it is in different positions when the sets are ordered. In this case, we have $f(I,I_{\text{del}})=0$. 

Now, observe that 
\begin{align}
    d_H(\tilde{\mathbf{G}}^{(1)}(I),\tilde{\mathbf{G}}_{\sigma}^{(2)})- d_H(\tilde{\mathbf{G}}^{(1)}(I_{\text{del}}),\tilde{\mathbf{G}}_{\sigma}^{(2)}) 
\end{align}
can be written as the difference between two Binomial random variables with respective parameters ${(\Lambda_n(\hat{K}-f(I,I_{\text{del}})),q_0(\sigma))}$ and ${(\Lambda_n(\hat{K}-f(I,I_{\text{del}})),q_1(\sigma))}$.
From Hoeffding's inequality~\cite{hoeffding1994probability}, we obtain 
\begin{align}
    \Pr(&d_H(\tilde{\mathbf{G}}^{(1)}(I),\tilde{\mathbf{G}}_{\sigma}^{(2)})\le d_H(\tilde{\mathbf{G}}^{(1)}(I_{\text{del}}),\tilde{\mathbf{G}}_{\sigma}^{(2)}) )\notag\\
    &= \Pr(d_H(\tilde{\mathbf{G}}^{(1)}(I),\tilde{\mathbf{G}}_{\sigma}^{(2)})- d_H(\tilde{\mathbf{G}}^{(1)}(I_{\text{del}}),\tilde{\mathbf{G}}_{\sigma}^{(2)})\le 0 )\\
    &\le \exp\left(-\frac{1}{2} \Lambda_n(\hat{K}-f(I,I_{\text{del}})) (q_0(\sigma)-q_1(\sigma))^2 \right)\\
    &= q^{\Lambda_n(\hat{K}-f(I,I_{\text{del}}))}
\end{align}
where 
\begin{align}
    q\triangleq e^{-\frac{1}{2}(q_0(\sigma)-q_1(\sigma))^2}<1
\end{align}
Furthermore, the number of false deletion index sets $I$ with a given $f(I,I_{\text{del}})$ can be wastefully upper bounded by $\binom{n}{\hat{K}}$. Thus, we can further bound the probability of error as
\begin{align}
    \Pr\left(\hat{I}_\text{del}(\sigma)\neq I_\text{del}\right)&\le \kappa_n + \sum\limits_{i=0}^{\hat{K}-1} \binom{n}{\hat{K}} q^{\Lambda_n(\hat{K}-i)}\\
    &= \kappa_n + \binom{n}{\hat{K}} \sum\limits_{i=0}^{\hat{K}-1} q^{\Lambda_n(\hat{K}-i)}\\
    &= \kappa_n + \binom{n}{\hat{K}} \sum\limits_{j=1}^{\hat{K}} q^{\Lambda_n j}\\
    &= \kappa_n + \binom{n}{\hat{K}} \sum\limits_{i=0}^{\hat{K}-1} q^{\Lambda_n (i+1)}\\
    &= \kappa_n + \binom{n}{\hat{K}} \sum\limits_{i=0}^{\hat{K}-1} q^{\Lambda_n} q^{\Lambda_n i}\\
    &= \kappa_n +\binom{n}{\hat{K}} q^{\Lambda_n} \sum\limits_{i=0}^{\hat{K}-1}  q^{\Lambda_n i}\\
    &\le \kappa_n +2^{n H_b(\hat{K}/n)} q^{\Lambda_n} \frac{1-q^{\Lambda_n  \hat{K}}}{1-q^{\Lambda_n}}\\
    &\le \kappa_n +2^{n H_b(\hat{K}/n)} q^{\Lambda_n} \frac{1}{1-q}\\
    &= \kappa_n +\frac{1}{1-q} 2^{n H_b(\hat{K}/n)-\Lambda_n\log\frac{1}{q}}\label{eq:deldetfinal}
\end{align}
where $H_b$ denotes the binary entropy function. Observe that the RHS of \eqref{eq:deldetfinal} vanishes as $n\to\infty$ if 
\begin{align}
    \Lambda_n\ge \frac{n H_b(\hat{K}/n)}{\log\frac{1}{q}}{=\frac{2 n H_b(\hat{K}/n)}{(q_0(\sigma)-q_1(\sigma))^2 \log e}}
\end{align}
which can be satisfied with some $\Lambda_n=\Theta(n)$.
Thus a seed order $d=1$ is sufficient for successful deletion detection.
\end{proof}
In contrast with the linear seed size of Lemma~\ref{lem:seededdeletiondetection}, \cite{bakirtas2021database} requires that the number of seeds is logarithmic in the number of columns. This is because in~\cite{bakirtas2021database} the performance criterion is the successful detection of an \emph{arbitrarily-chosen} deleted column, whereas in this work, the criterion is the successful detection of \emph{all} deleted columns.
\subsection{Row Matching Scheme and Achievability}\label{subsec:matchingscheme}
We are now ready to outline the proof of achievability of Theorem~\ref{thm:mainresult}.
\vspace*{-0.75em}
\begin{sketchproofachievable}
Let $S^n$ be the underlying repetition pattern and $K\triangleq\sum_{i=1}^n S_i$ be the number of columns in $\mathbf{D}^{(2)}$. The matching scheme we propose follows these steps:
\begin{enumerate}[label=\textbf{ \arabic*)},leftmargin=1.3\parindent]
\item Perform replica detection as in Section~\ref{subsec:replicadetection}. The probability of error of this step is denoted by $\rho_n$.
\item Perform deletion detection using seeds as in Section~\ref{subsec:seededdeletiondetection}. The probability of error is denoted by $\mu_n$. At this step, we have an estimate $\hat{S}^n$ of $S^n$.
\item Using $\hat{S}^n$, place markers between the noisy replica runs of different columns to obtain $\tilde{\mathbf{D}}^{(2)}$. If a run has length 0, \emph{i.e.} deleted, introduce a column consisting of erasure symbol $\ast\notin\mathfrak{X}$. Note that provided that the detection algorithms in Steps~1 and 2 have performed correctly, there are exactly $n$ such runs, where the $j$\textsuperscript{th} run in $\tilde{\mathbf{D}}^{(2)}$ corresponds to the noisy copies of the $j$\textsuperscript{th} column of {$\Theta_n\circ{\mathbf{D}}^{(1)}$} if $S_j\neq 0$, and an erasure column otherwise. 
\begin{figure}[t]
\centerline{\includegraphics[width=0.48\textwidth]{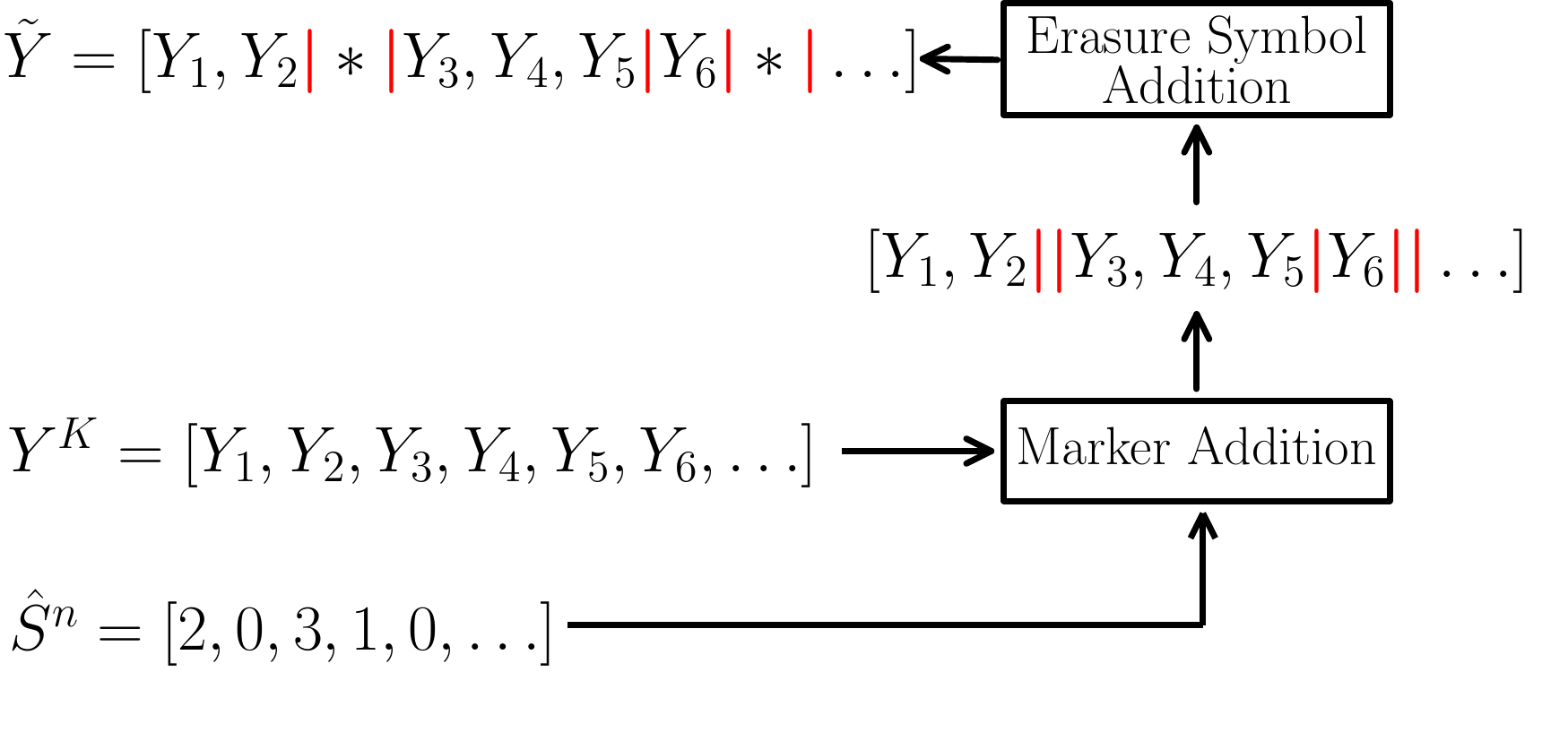}}
\caption{An example of the construction of $\tilde{\mathbf{D}}^{(2)}$, as described in Step~3 of the proof of Theorem~\ref{thm:mainresult}, illustrated over a pair of rows $X^n$ of $\mathbf{D}^{(1)}$ and $Y^K$ of $\mathbf{D}^{(2)}$. After these steps, in Step~4 we check the joint typicality of the rows $X^n$ of $\mathbf{D}^{(1)}$ and $\tilde{Y}$ of $\tilde{\mathbf{D}}^{(2)}$.}
\label{fig:marker}
\end{figure}
\item Fix $\epsilon>0$. Match the $l$\textsuperscript{th} row $Y^K_{l}$ of $\tilde{\mathbf{D}}^{(2)}$ with the $i$\textsuperscript{th} row $X^n_i$ of {${\mathbf{D}}^{(1)}$}, if $X_i$ is the only row of {${\mathbf{D}}^{(1)}$} jointly $\epsilon$-typical with $Y^K_l$ according to $p_{X,Y^S,S}$, assigning $\hat\Theta_n(i)=l$, where
{
 \begin{align}
    p_{X,{Y}^S|S}(x,y^s|s)&=\begin{cases}
      p_X(x) \mathds{1}_{[y^s = \ast]} &\text{if } s=0\\
      p_X(x) \prod\limits_{j=1}^s p_{Y|X}(y_j|x) &\text{if } s\ge 1
    \end{cases}
\end{align}
}
with $y^s=y_1\dots y_s$.
Otherwise, declare an error.
\end{enumerate}

The column discarding and the marker addition as described in Steps~3-4, are illustrated in Figure~\ref{fig:marker}.

The total probability of error of this scheme (as in~\eqref{eq:proberror}) can be bounded as follows
\begin{align}
    P_e 
    &\le 2^{n R}  2^{-n(I(X;Y^S,S)-3 \epsilon)}+\epsilon+\rho_n+\mu_n
\end{align}
Note that since $m_n$ is exponential in $n$, $d\ge1$, and from WLLN, using Lemma~\ref{lem:noisyreplicadetection} we have $\rho_n\to0$ and using Lemma~\ref{lem:seededdeletiondetection} we have $\mu_n\to0$ as $n\to\infty$. Thus $P_e\le \epsilon$ as $n\to\infty$ if $R<I(X;Y^S,S)$, concluding the proof. 
\end{sketchproofachievable}

The matching scheme proposed above for noisy repeated database matching is different from the one proposed in~\cite{noiselesslonger} for the noiseless setting in several ways: First, in the noiseless setting, the seeds are not required and a single detection algorithm can identify deletions and replicas. Second, in Step~3 of the proof above, unlike~\cite{noiselesslonger}, the noisy replicas are retained. This is because under noise, replicas offer additional information, similar to a repetition code. This implies an important distinction between database matching under synchronization errors and decoding in a repeat channel~\cite{cheraghchi2020overview}: In database matching, the identical repetition pattern over a large number of rows allows us to detect deletions and replicas, which in turn improves the achievable database growth rate. On the other hand, in a repeat channel, detecting the repetition pattern is in general not possible and the replicas have a negative impact on the channel capacity.

\subsection{Converse}
\label{subsec:converse}
We argue that the database growth rate achieved in Theorem~\ref{thm:mainresult} is in fact tight using a genie-aided proof through Fano's inequality where the repetition pattern $S^n$ is known. We argue that since the rows are \emph{i.i.d.} conditioned on the repetition pattern $S^n$, the seeds $(\mathbf{G}^{(1)},\mathbf{G}^{(2)})$ do not offer any additional information given $S^n$. Therefore, as the seeds become irrelevant in this genie-aided proof, we argue that the converse result holds for any seed order $d$.

\begin{proofconverse}
Let $R$ be the database growth rate and $P_e$ be the probability that the scheme is unsuccessful for a uniformly-selected row pair. More formally,
\begin{align}
   P_e&\triangleq \Pr\left(\boldsymbol{\Theta}_n(J)\neq\hat{\boldsymbol{\Theta}}_n(J)\right),\hspace{1em} J\sim\text{Unif}([m_n])
\end{align}
Furthermore, let $S^n$ be the repetition pattern and $K=\sum_{j=1}^n S_j$. Since $\boldsymbol{\Theta}_n$ is a uniform permutation, from Fano's inequality, we have
\vspace{-1em}
\begin{adjustwidth}{-0.2cm}{0pt}
\begin{align}
    H(\boldsymbol{\Theta})&\le 1+m_n P_e\log m_n+I(\boldsymbol{\Theta}_n;\mathbf{D}^{(1)},\mathbf{D}^{(2)},\mathbf{G}^{(1)},\mathbf{G}^{(2)})
\end{align}
\end{adjustwidth}
From the independence of $\boldsymbol{\Theta}_n$, $\mathbf{D}^{(2)}$ and $(\mathbf{G}^{(1)},\mathbf{G}^{(2)})$, we get
\vspace{-1em}
\begin{adjustwidth}{-0.3cm}{0pt}
\begin{align}
    I(\boldsymbol{\Theta}_n;\mathbf{D}^{(1)},\mathbf{D}^{(2)},\mathbf{G}^{(1)},\mathbf{G}^{(2)})
    &= I(\boldsymbol{\Theta}_n;\mathbf{D}^{(1)}|\mathbf{D}^{(2)},\mathbf{G}^{(1)},\mathbf{G}^{(2)})\\
    &\le I(\boldsymbol{\Theta}_n,\mathbf{D}^{(2)},\mathbf{G}^{(1)},\mathbf{G}^{(2)};\mathbf{D}^{(1)})\\
    &\le I(\boldsymbol{\Theta}_n,\mathbf{D}^{(2)},S^n;\mathbf{D}^{(1)})\label{eq:converseassumeS}\\
    &= m_n  I(Y^K,S^n;X^n)\label{eq:iidrows}\\
    &= m_n n I(X;Y^S,S)\label{eq:iidentries}
\end{align}
\end{adjustwidth}
where \eqref{eq:converseassumeS} follows from the fact that given $S^n$, $\mathbf{G}^{(1)},\mathbf{G}^{(2)}$ do not offer any additional information. Equation~\eqref{eq:iidrows} follows from the fact that non-matching rows are \emph{i.i.d.} conditioned on the repetition pattern ${S}^n$. Furthermore, \eqref{eq:iidentries} follows from the fact that the entries of $\mathbf{D}^{(1)}$ \emph{i.i.d.}, and the noise on the entries are also \emph{i.i.d}. 

Finally, from Stirling's approximation and \eqref{eq:iidentries}, we obtain
\begin{align}
    R&= \lim\limits_{n\to\infty}\frac{1}{m_n n}H(\boldsymbol{\Theta_n})\\
    &\le \lim\limits_{n\to\infty}\left[ \frac{1}{m_n n}+P_e R+I(X;Y^S,S)\right]\\
    &\le I(X;Y^S,S) \label{eqn:converselast}
\end{align}
where \eqref{eqn:converselast} follows from the fact that $P_e\to 0$ as $n\to\infty$.
\end{proofconverse}

\section{Conclusion}
\label{sec:conclusion}
In this work, we have studied the database matching problem under random noisy column repetitions. We have showed that the running Hamming distances between the consecutive columns of the labeled noisy repeated database can be used to detect replicas. In addition, given seeds whose size grows logarithmic with the number of rows, an exhaustive search over the deletion patterns can be used to infer the locations of the deletions. Using the proposed detection algorithms, and a joint typicality based rowwise matching scheme, we have derived an achievable database growth rate, which we prove is tight. Therefore, we have completely characterized the database matching capacity under noisy column repetitions. 

\typeout{}
\bibliographystyle{IEEEtran}
\bibliography{references}

\begin{thebibliography}{10}
\providecommand{\url}[1]{#1}
\csname url@samestyle\endcsname
\providecommand{\newblock}{\relax}
\providecommand{\bibinfo}[2]{#2}
\providecommand{\BIBentrySTDinterwordspacing}{\spaceskip=0pt\relax}
\providecommand{\BIBentryALTinterwordstretchfactor}{4}
\providecommand{\BIBentryALTinterwordspacing}{\spaceskip=\fontdimen2\font plus
\BIBentryALTinterwordstretchfactor\fontdimen3\font minus
  \fontdimen4\font\relax}
\providecommand{\BIBforeignlanguage}[2]{{%
\expandafter\ifx\csname l@#1\endcsname\relax
\typeout{** WARNING: IEEEtran.bst: No hyphenation pattern has been}%
\typeout{** loaded for the language `#1'. Using the pattern for}%
\typeout{** the default language instead.}%
\else
\language=\csname l@#1\endcsname
\fi
#2}}
\providecommand{\BIBdecl}{\relax}
\BIBdecl

\bibitem{ohm2009broken}
P.~Ohm, ``Broken promises of privacy: Responding to the surprising failure of
  anonymization,'' \emph{UCLA L. Rev.}, vol.~57, p. 1701, 2009.

\bibitem{bigdata}
J.~Sedayao, R.~Bhardwaj, and N.~Gorade, ``Making big data, privacy, and
  anonymization work together in the enterprise: Experiences and issues,'' in
  \emph{2014 IEEE International Congress on Big Data}, 2014, pp. 601--607.

\bibitem{naini2015you}
F.~M. {Naini}, J.~{Unnikrishnan}, P.~{Thiran}, and M.~{Vetterli}, ``Where you
  are is who you are: User identification by matching statistics,'' \emph{IEEE
  Trans. Inf. Forensics Security}, vol.~11, no.~2, pp. 358--372, 2016.

\bibitem{datta2012provable}
A.~Datta, D.~Sharma, and A.~Sinha, ``Provable de-anonymization of large
  datasets with sparse dimensions,'' in \emph{International Conference on
  Principles of Security and Trust}.\hskip 1em plus 0.5em minus 0.4em\relax
  Springer, 2012, pp. 229--248.

\bibitem{narayanan2008robust}
A.~{Narayanan} and V.~{Shmatikov}, ``Robust de-anonymization of large sparse
  datasets,'' in \emph{Proc. of IEEE Symposium on Security and Privacy}, 2008,
  pp. 111--125.

\bibitem{sweeney1997weaving}
L.~Sweeney, ``Weaving technology and policy together to maintain
  confidentiality,'' \emph{The Journal of Law, Medicine \& Ethics}, vol.~25,
  no. 2-3, pp. 98--110, 1997.

\bibitem{takbiri2018matching}
N.~Takbiri, A.~Houmansadr, D.~L. Goeckel, and H.~Pishro-Nik, ``Matching
  anonymized and obfuscated time series to users’ profiles,'' \emph{IEEE
  Transactions on Information Theory}, vol.~65, no.~2, pp. 724--741, 2019.

\bibitem{cullina}
D.~{Cullina}, P.~{Mittal}, and N.~{Kiyavash}, ``Fundamental limits of database
  alignment,'' in \emph{Proc. of IEEE International Symposium on Information
  Theory (ISIT)}, 2018, pp. 651--655.

\bibitem{shirani8849392}
F.~{Shirani}, S.~{Garg}, and E.~{Erkip}, ``A concentration of measure approach
  to database de-anonymization,'' in \emph{Proc. of IEEE International
  Symposium on Information Theory (ISIT)}, 2019, pp. 2748--2752.

\bibitem{dai2019database}
O.~E. Dai, D.~Cullina, and N.~Kiyavash, ``Database alignment with gaussian
  features,'' in \emph{The 22nd International Conference on Artificial
  Intelligence and Statistics}.\hskip 1em plus 0.5em minus 0.4em\relax PMLR,
  2019, pp. 3225--3233.

\bibitem{bakirtas2021database}
S.~Bakırtaş and E.~Erkip, ``Database matching under column deletions,'' in
  \emph{Proc. of IEEE International Symposium on Information Theory (ISIT)},
  2021, pp. 2720--2725.

\bibitem{noiselesslonger}
\BIBentryALTinterwordspacing
S.~Bakirtas and E.~Erkip, ``Matching of markov databases under random column
  repetitions,'' \emph{arXiv}, vol. abs/2202.01730, 2022. [Online]. Available:
  \url{http://arxiv.org/abs/2202.01730}
\BIBentrySTDinterwordspacing

\bibitem{kunisky2022strong}
D.~Kunisky and J.~Niles-Weed, ``Strong recovery of geometric planted
  matchings,'' in \emph{Proceedings of the 2022 Annual ACM-SIAM Symposium on
  Discrete Algorithms (SODA)}.\hskip 1em plus 0.5em minus 0.4em\relax SIAM,
  2022, pp. 834--876.

\bibitem{li2014input}
Y.~Li and G.~Han, ``Input-constrained erasure channels: Mutual information and
  capacity,'' in \emph{Proc. of IEEE International Symposium on Information
  Theory (ISIT)}, 2014, pp. 3072--3076.

\bibitem{shirani2017seeded}
F.~Shirani, S.~Garg, and E.~E., ``Seeded graph matching: Efficient algorithms
  and theoretical guarantees,'' in \emph{2017 51st Asilomar Conference on
  Signals, Systems, and Computers}, 2017, pp. 253--257.

\bibitem{fishkind2019seeded}
D.~Fishkind, S.~Adali, H.~Patsolic, L.~Meng, D.~Singh, V.~Lyzinski, and
  C.~Priebe, ``Seeded graph matching,'' \emph{Pattern Recognition}, vol.~87,
  pp. 203--215, 2019.

\bibitem{hoeffding1994probability}
W.~Hoeffding, ``Probability inequalities for sums of bounded random
  variables,'' in \emph{The collected works of Wassily Hoeffding}.\hskip 1em
  plus 0.5em minus 0.4em\relax Springer, 1994, pp. 409--426.

\bibitem{cover2006elements}
T.~M. Cover, \emph{Elements of Information Theory}.\hskip 1em plus 0.5em minus
  0.4em\relax John Wiley \& Sons, 2006.

\bibitem{cheraghchi2020overview}
M.~Cheraghchi and J.~Ribeiro, ``An overview of capacity results for
  synchronization channels,'' \emph{IEEE Transactions on Information Theory},
  vol.~67, no.~6, pp. 3207--3232, 2021.

\end{thebibliography}
\end{document}